\newtheorem{theorem}{Theorem}
\newtheorem{lemma}{Lemma}
\begin{document}

\begin{frontmatter}

\title{Partial exchangeability of the prior via shuffling}

\begin{aug}
\author{E.W. van Zwet} \ead[label=e1]{vanzwet@lumc.nl}
\affiliation{Leiden University Medical Center}
\address{P.O. Box 9600, 2300 RC Leiden, The Netherlands, \printead{e1}}
\end{aug}

\begin{abstract}
In inference problems involving a multi-dimensional parameter $\theta$, it is often natural to consider decision rules that have a risk which is invariant under some group $G$ of permutations of $\theta$. We show that this implies that the Bayes risk of the rule is {\em as if} the prior distribution of the parameter is partially exchangeable with respect to $G$. We provide a symmetrization technique for incorporating partial exchangeability of $\theta$ into a statistical model, without assuming any other prior information. We refer to this technique as {\em shuffling}. Shuffling can be viewed as an instance of empirical Bayes, where we estimate the (unordered) multiset of parameter values $\{\theta_1,\theta_2,\dots,\theta_p\}$ while using a uniform prior on $G$ for their ordering. Estimation of the multiset is a missing data problem which can be tackled with a stochastic EM algorithm. We show that in the special case of estimating the mean-value parameter in a regular exponential family model, shuffling leads to an estimator that is a weighted average of permuted versions of the usual maximum likelihood estimator. This is a novel form of shrinkage.
\end{abstract}

\begin{keyword}
\kwd{shuffling}
\kwd{permutation invariant risk}
\kwd{exchangeability}
\kwd{empirical Bayes}
\kwd{shrinkage}
\end{keyword}

\end{frontmatter}

\section{Introduction}
We start with the following compound (or simultaneous) estimation problem. Suppose we observe  $X$ with distribution $f(x | \theta)$, where $\theta=(\theta_1,\theta_2,\dots,\theta_p) \in \Theta^p$ is an unknown parameter vector.  The goal is to estimate $\theta$ under squared error loss. The maximum likelihood estimator (MLE)
\begin{equation}\label{mle}
\hat \theta = \max_{\theta \in  \Theta^p} f(x | \theta)
\end{equation}
is often asymptotically optimal, but its finite sample performance can still be disappointing. We briefly discuss two classical examples.

\medskip \noindent
{\bf Example 1} (James--Stein)  Suppose we observe a sample $X$ from the multivariate normal distribution with mean vector $\theta \in {\mathbb R}^p$ and covariance $I$. The obvious estimator $\hat \theta = X$ is maximum likelihood, minimum variance unbiased and meets a host of other optimality criteria. Therefore, it came as quite a surprise when Stein \cite{stein1956inadmissibility} proved that if $p \geq 3$, it is not admissible under squared error loss. James and Stein \cite {james1961estimation} proceeded to construct their well-known shrinkage estimator.

\medskip \noindent
{\bf Example 2} (species sampling) Suppose we have a sample $X=(X_1,X_2,\dots,X_p)$ from the multinomial distribution with parameters $n$ and $\theta \in  {\mathbb R}^p_{\geq 0}$ such that $\sum_{i=1}^p \theta_i = 1$. Again, the obvious estimator $\hat \theta = X/n$ is maximum likelihood, minimum variance unbiased et cetera. However, consider the following situation. Suppose $p$ is quite large, say $p=100$, and in $n=10$ trials we have not encountered the same outcome twice. Common sense suggests that there must be many outcomes with (small) positive probability that we have not seen yet. The MLE disagrees; it puts unit probability mass on the set of observed outcomes. This behaviour is problematic in all applications where there are many probabilities that are small in relation to the sample size. In such cases, a substantial part of the probability mass will be unobserved and misattributed to the observed outcomes for by the MLE. Interestingly, the total unobserved probability mass can easily be estimated by the Good--Turing estimator  \cite{good1953population} which is just the proportion of singleton observations.

\medskip \noindent
The empirical Bayes (EB) approach has been very successful in compound estimation problems. We refer to  Efron \cite{efron2003robbins} for a discussion. Consider the following hierarchical model.
\begin{enumerate}
\item Sample $\theta=(\theta_1,\theta_2,\dots,\theta_p)$  from a distribution $P \in \cal P$
\item Observe a sample $X$ from $f(x | \theta)$.
\end{enumerate}
In a sense this a frequentist model, because it relies on a fixed, unknown parameter (distribution) $P$. We use the data to find some estimate $\hat P$ of $P$ and then estimate $\theta$ by its conditional expectation $E_{\hat P}(\theta \mid X)$. If we do not wish to assume too much about $\theta$, we can take the set ${\cal P}$ to be very large. For instance, we could take $\theta$ to be an i.i.d.\ sample from an arbitrary distribution. This is the nonparametric version of EB, see \cite{laird1978nonparametric}. A recent study \cite{jiang2009general} shows good performance -- both in theory and in simulations -- in the multivariate normal model with unknown means when $p$ becomes large.

A fully Bayesian approach to the problem would be to add a level to the hierarchy by putting an (hyper)prior distribution on the set $\cal P$. Then $\theta$ could be estimated by the posterior mean $E(\theta \mid X)$. If we do not want to assume any external information about $\theta$, we should turn to objective Bayesian methods \cite{berger2006case}. In single parameter problems, Jeffrey's prior is a possible choice \cite{jeffreys1946invariant} and in multi-dimensional problems, the reference prior \cite{bernardo1979reference} could be used. However,  it seems that in multi-parameter problems, the prior must depend on the choice of a one-dimensional parameter of interest to get posterior distributions with optimal properties \cite{overall}. This is awkward if we are equally interested in each $\theta_i$.

\section{Shuffling}
Suppose that, from a Bayesian point of view, we are willing to assume partial exchangeability of the parameter vector $\theta$ with respect to some group $G$ of permutations of the index set $\{1,2,\dots,p\}$, c.f.\ \cite{aldous1981representations}. This means that the joint distribution of $\theta_1,\theta_2,\dots,\theta_p$ is invariant under all permutations in $G$,
$$(\theta_1,\theta_2,\dots,\theta_p) \overset{d}{=}(\theta_{\pi(1)},\theta_{\pi(2)},\dots,\theta_{\pi(p)}), \quad \text{for every\ }\pi \in G.$$
If $G$ is the group of {\em all} permutations of  $\{1,2,\dots,p\}$, then $\theta$ is called (fully) exchangeable. Now consider the following {\em shuffled} model
\begin{enumerate}
\item Fix an arbitrary $\psi=(\psi_1,\psi_2,\dots,\psi_p) \in \Theta^p$.
\item Let $\Pi$ be a completely random element of a group $G$ permutations of $\{1,2,\dots,p\}$ and set
$$\theta=\psi_\Pi=(\psi_{\Pi(1)},\psi_{\Pi(2)},\dots,\psi_{\Pi(p)}).$$
\item Observe a sample $X$ from $f(x | \theta)$.
\end{enumerate}
This is  a frequentist model, with an arbitray but  fixed parameter vector $\psi$. However, we also have a random vector $\theta$ which, by construction, is partially exchangeable with respect to $G$.  If $G$ is taken to consist only of the identity, then the shuffled model reduces to the usual frequentist model. Also, if $p=1$ then the shuffled model and the usal model coincide. Note that even if $f(x | \theta)=\prod_{i=1}^n f(x_i \mid \theta)$, then the $X_i$ are not necessarily independent in the shuffled model.

Now, $\psi$ is a fixed parameter and we can estimate it from the data $X$, for instance by using the method of maximum likelihood. That is,
\begin{equation}\label{lik}
\hat \psi = \underset{\psi \in \Theta^p}{\max}\ \frac{1}{|G|} \sum_{\pi \in G} f_{\psi_\pi}(X)
\end{equation}
where $|G|$ is the number of elements of $G$. Clearly, the likelihood is invariant under permutations of $\psi$ and hence $\psi$ can only be identified up to the permutations in $G$. The lack of identifiability will turn out not to be a problem, but if so desired it can be defined away by assuming that $\psi$ is ordered.

Noting that $\theta=\psi_\Pi$ is now random, we can estimate it by 
\begin{equation}
\hat \theta_{\rm shuffle} = E_{\hat \psi}(\theta \mid X)=E_{\hat \psi}(\hat \psi_\Pi \mid X).
\end{equation}
The expectation is invariant under permutations of $\hat \psi$ and so it does not matter that  $\hat \psi$ is only defined up to the permutations in $G$.  

The estimator (\ref{lik}) has been considered in the multinomial case by Orlitsky et al.\ \cite{orlitsky2003always}. They refer to $\hat \psi$ as the ``high profile estimator'' or ``pattern MLE''. At least in the multinomial case, it exists and is unique (up to permutation). It is also consistent in the sense that the ordered $\hat \psi$  converges to the ordered $\theta$ as the sample size $n$ tends to infinity, see \cite{orlitsky2005convergence}, \cite{anevski2013estimating}.

The shuffled estimator can be very different from the usual MLE. As an illustration \cite{orlitsky2004modeling}, consider $X$ having the binomial distribution with parameters $n=3$ and probability of success $p$ and suppose we observe $X=2$. The ordinary MLE is of course $\hat p=2/3$. 
In the shuffled model, we assume exchangeability of $p$ and $1-p$. An easy computation then shows that MLE of $\psi$ is $\hat \psi=(1/2,1/2)$, which implies $\hat p_{\rm shuffle}=1/2$. This may seem wrong, but note that it is not possible to have a more even distribution of failures and successes in 3 trials. In other words, observing 2 successes in 3 trials is in perfect agreement with $p=1/2$. Is there really any evidence at all to conclude that successes are twice as likely as failures?

\section{Decision theory}
Exchangeability of the parameter vector is a fairly mild assumption, but in certain applications it might not be appropriate. Also, from a frequentist perspective, $\theta$ is not random and so exchangeability does not apply. However, in this section we show that the shuffled model can be motivated by a decision theoretic argument that does not depend on whether the parameter is exchangeable, or not.

Recall the usual decision theoretic framework. If we observe $X$ then we take action $a(X) \in A$, which results in a loss $L(\theta,a(X))$ when the parameter is $\theta$. The function $a(\cdot)$ is called a decision rule. The expected loss, or risk, is
$$R(\theta,a)=E_\theta L(\theta,a(X)).$$
We call the risk {\em permutation invariant} with respect to $G$ if 
\begin{equation}
R(\theta,a)=R(\theta_\pi,a)\quad \text{for all}\ \pi \in G.
\end{equation}
For example, suppose $X$ has the multivariate normal distribution with mean $\theta \in {\mathbb R}^p$ and covariance $I$ and we estimate $\theta$ by $\hat \theta = X$. If $G$ is the group of all permutations of $\{1,2,\dots,p\}$  then the risk of $\hat \theta$ under squared error loss  is permutation invariant with respect to $G$. In fact, it is permutation invariant under any $\ell_p$ loss. 

Estimation of one-dimensional parameters that are permutation invariant functions of $\theta$ usually lead to permutation invariant risk. Examples of such parameters are $\sum_{i=1}^p (\theta_i - \bar \theta)^2$ and $\max(\theta_1,\theta_2,\dots,\theta_p)$.

Besides estimation, there are other decision problems with permutation invariant risk. We might be interested in ranking the $\theta_i$, for instance  when they refer to the performance of schools or hospitals. Or, if the $\theta_i$ refer to the effects of different genes on some phenotype, then we might be interested in testing all the hypotheses $H_0 : \theta_i=0$. In such situations, we will typically have permutation invariant risk. An example of a situation where the risk may be {\em not} permutation invariant, is estimation of $\theta_1$.

Suppose Nature chooses $\theta$ according to some distribution $\Lambda$ on the parameter space $\Theta^p$. Note that we are not assuming that the $\theta_i$ are exchangeable under $\Lambda$. In fact, $\Lambda$ could be point mass at some specific value which would correspond to the frequentist point of view. The Bayes risk is
\begin{equation}
{\cal R}(\Lambda,a)=\int_{\Theta^p}  R(\theta,a) \rm{d} \Lambda(\theta).
\end{equation}
Now define a distribution $\Lambda^*$ by symmetrizing $\Lambda$ over $G$. Set
\begin{equation}\label{symm}
\Lambda^*(S) =  \sum_{\pi \in G} \int_{\{\theta : \theta_\pi \in S\}}  \frac{1}{|G|} {\rm d} \Lambda(\theta)
\end{equation}
for all (measurable) $S \subseteq \Theta^p$.  We can obtain a sample from $\Lambda^*$ by drawing $\theta$ from $\Lambda$, selecting a permutation $\pi$ uniformly at random from $G$ and then permuting $\theta$ according to $\pi$. This is the construction we used in the shuffled model of the previous section. Evidently, $\Lambda^*$ is partially exchangeable with respect to $G$.  The following lemma characterizes partially exchangeable distributions as symmetrizations:

\begin{lemma}
A distribution on $\Theta^p$ is partially exchangeable with respect to $G$ if and only if it can be obtained by symmetrizing a distribution on $\Theta^p$ over $G$.
\end{lemma}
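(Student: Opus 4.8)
The plan is to prove both implications directly from the group structure of $G$, viewing the symmetrization (\ref{symm}) as an average of pushforwards of $\Lambda$ under the coordinate-permutation maps. First I would fix notation: for each $\pi \in G$ let $T_\pi : \Theta^p \to \Theta^p$ denote the measurable map $T_\pi(\theta) = \theta_\pi = (\theta_{\pi(1)},\dots,\theta_{\pi(p)})$. Rewriting (\ref{symm}) in this language gives $\Lambda^* = \frac{1}{|G|}\sum_{\pi \in G}(T_\pi)_*\Lambda$, where $(T_\pi)_*\Lambda$ is the law of $\theta_\pi$ when $\theta \sim \Lambda$; concretely $\Lambda^*(S) = \frac{1}{|G|}\sum_{\pi \in G}\Lambda(\{\theta : \theta_\pi \in S\})$. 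In the same notation, a distribution $\mu$ is partially exchangeable with respect to $G$ precisely when $(T_\sigma)_*\mu = \mu$ for every $\sigma \in G$.

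For the ``only if'' direction I would argue that a partially exchangeable distribution is its own symmetrization, and is therefore trivially of the required form. Indeed, if $\mu$ is partially exchangeable then $(T_\pi)_*\mu = \mu$ for every $\pi \in G$, so its symmetrization is $\mu^* = \frac{1}{|G|}\sum_{\pi \in G}(T_\pi)_*\mu = \frac{1}{|G|}\sum_{\pi \in G}\mu = \mu$. Hence $\mu$ is obtained by symmetrizing the distribution $\mu$ itself, which exhibits it as a symmetrization.

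For the ``if'' direction I would verify that every symmetrization is partially exchangeable, which is where the group axioms enter. A one-line index computation gives the composition law of the maps: $(T_\sigma(T_\pi(\theta)))_i = (T_\pi(\theta))_{\sigma(i)} = \theta_{\pi(\sigma(i))}$, so that $T_\sigma \circ T_\pi = T_{\pi \circ \sigma}$. Fixing $\sigma \in G$ and applying $(T_\sigma)_*$ to $\Lambda^*$ then yields $(T_\sigma)_*\Lambda^* = \frac{1}{|G|}\sum_{\pi \in G}(T_{\pi \circ \sigma})_*\Lambda$. Because $G$ is a group, the map $\pi \mapsto \pi \circ \sigma$ is a bijection of $G$ onto itself, so reindexing the sum recovers $\frac{1}{|G|}\sum_{\pi' \in G}(T_{\pi'})_*\Lambda = \Lambda^*$. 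Thus $(T_\sigma)_*\Lambda^* = \Lambda^*$ for all $\sigma \in G$, i.e.\ $\Lambda^*$ is partially exchangeable.

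The main obstacle is essentially bookkeeping rather than mathematics: one must pin down the composition convention so that $\{T_\pi : \pi \in G\}$ is a genuine homomorphic (here anti-homomorphic) image of $G$, and confirm that translation by a fixed $\sigma$ permutes $G$, since that is exactly what licenses the reindexing of the sum. Closure of $G$ under composition and the existence of inverses are the only group properties actually used, and both directions reduce to the observation that averaging over an orbit of the group action leaves the average invariant.
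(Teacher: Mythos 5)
Your proof is correct and follows the same two-step structure as the paper's: the symmetrization of a partially exchangeable distribution is itself (giving the ``only if'' direction), and every symmetrization is invariant under $G$ (the ``if'' direction, which the paper dispatches with ``by construction'' and you verify explicitly via the reindexing $\pi \mapsto \pi \circ \sigma$). The extra detail is welcome but the argument is essentially identical.
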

\begin{proof}
A symmetrized distribution is partially exchangeable by construction. For the converse, note that a partially exchangeable distribution is invariant under symmetrization. Thus any partially exchangeable distribution is a symmetrization of itself.
\end{proof}
The lemma confirms that by using symmetrization over $G$, the shuffled model is indeed assuming partial exchangeability of the prior with respect to $G$ and nothing else. We also have the following.

\begin{theorem}
Consider the definitions and notation of this section. If one of the following two conditions holds:
\begin{enumerate}
\item The prior $\Lambda$ for $\theta$ is partially exchangeable with respect to $G$
\item The risk of decision rule $a(\cdot)$ is permutation invariant with respect to $G$
\end{enumerate}
Then
$${\cal R}(\Lambda,a) = {\cal R}(\Lambda^*,a).$$
\end{theorem}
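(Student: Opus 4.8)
The plan is to reduce both cases to a single change-of-variables identity for integration against the symmetrized prior $\Lambda^*$. The starting observation is that (\ref{symm}) expresses $\Lambda^*$ as the average over $\pi \in G$ of the pushforwards of $\Lambda$ under the permutation maps $\theta \mapsto \theta_\pi$. From this I would derive that for every nonnegative measurable function $g$ on $\Theta^p$,
$$\int_{\Theta^p} g(\theta)\, {\rm d}\Lambda^*(\theta) = \frac{1}{|G|}\sum_{\pi \in G} \int_{\Theta^p} g(\theta_\pi)\, {\rm d}\Lambda(\theta).$$
This identity is literally (\ref{symm}) when $g$ is the indicator of a measurable set $S$, since $\{\theta : \theta_\pi \in S\}$ is exactly the preimage of $S$ under $\theta \mapsto \theta_\pi$; it then extends to simple functions by linearity and to general $g$ by monotone convergence.

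The rest is a one-line collapse of an average in each case. Taking $g(\theta) = R(\theta, a)$ gives
$${\cal R}(\Lambda^*, a) = \frac{1}{|G|}\sum_{\pi \in G} \int_{\Theta^p} R(\theta_\pi, a)\, {\rm d}\Lambda(\theta).$$
Under condition~2, permutation invariance of the risk means $R(\theta_\pi, a) = R(\theta, a)$ for every $\pi \in G$, so each summand equals ${\cal R}(\Lambda, a)$ and the average reduces to ${\cal R}(\Lambda, a)$. Under condition~1, I would simply invoke the preceding lemma: a partially exchangeable $\Lambda$ is invariant under symmetrization, so $\Lambda^* = \Lambda$ and the conclusion is immediate. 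Equivalently, partial exchangeability gives $\theta_\pi \overset{d}{=} \theta$ under $\Lambda$, whence $\int R(\theta_\pi, a)\, {\rm d}\Lambda = \int R(\theta, a)\, {\rm d}\Lambda$ for each $\pi$, and the average again collapses.

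The only genuinely technical point, and hence the main obstacle, is justifying the change-of-variables identity at the level of generality required — ensuring $R(\cdot, a)$ is measurable and that the relevant integrals are well defined so that the extension from indicators to general $g$ is valid. This is routine measure-theoretic bookkeeping and carries no real difficulty; once the identity is in hand, both halves of the theorem follow by inspection, and it is worth emphasizing that the two conditions are genuinely independent hypotheses, each sufficient on its own to force the collapse of the average.
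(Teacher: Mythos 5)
Your proposal is correct and follows essentially the same route as the paper: condition~1 is handled by invoking the lemma that a partially exchangeable $\Lambda$ equals its own symmetrization, and condition~2 by averaging $R(\theta_\pi,a)$ over $\pi\in G$ and identifying the result with ${\cal R}(\Lambda^*,a)$. The only difference is that you make explicit the change-of-variables identity that the paper uses implicitly in its display, which is a harmless (and slightly more careful) elaboration.
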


\begin{proof}
As we noted in the preceding lemma, a partially exchangeable distribution is invariant under symmetrization. So, if the first condition holds, then $\Lambda$ and $\Lambda^*$ coincide and hence ${\cal R}(\Lambda,a) = {\cal R}(\Lambda^*,a)$. If the second condition holds, then
\begin{equation}
{\cal R}(\Lambda,a) =\int_{\Theta^p} R(\theta,a) {\rm d} \Lambda(\theta) =\int_{\Theta^p} \sum_{\pi \in G} R(\theta_\pi,a) \frac{1}{|G|} {\rm d} \Lambda(\theta) = {\cal R}(\Lambda^*,a).
\end{equation}
\end{proof}
The theorem says that if we are considering a decision rule with permutation invariant risk, then the Bayes risk is {\em as if} Nature has symmetrized the prior. So, if for some choice of loss function, we restrict ourselves to decision rules with permutation invariant risk, then we are entitled to assume exchangeability of the parameter. Using the shuffled model is a way to do that, without assuming any other prior information.

Exchangeability of the parameter is justified whenever the risk is permutation invariant. It is particularly interesting that this also holds for a frequentist who would not even consider $\theta$ to be random!

Let us reconsider the James--Stein phenomenon. The inadmissibility of the MLE in the normal location model when $p \geq 3$ is quite mysterious and it is sometimes referred to as Stein's paradox. As the $X_i$ are all independent, it seems that there is no information in $X_2,X_3,\dots,X_p$ about $\theta_1$, and yet Stein \cite{stein1956inadmissibility} proved that there is. Where is the information coming from? Various explanations have been suggested but none entirely satisfactory \cite{stigler19901988}. Efron and Morris  \cite{efron1973stein} noted that the James--Stein estimator can be motivated from an empirical Bayes perspective, but Stein proved his result without relying on any assumptions about the distribution of the $\theta_i$. He did not even assume the $\theta_i$ {\em have} a distribution. However, note that both the MLE and the James--Stein estimator have a risk (under total squared error loss) that is invariant under permutation. While the James--Stein estimator exploits the implied exchangeability of the parameter, the MLE does not.

\section{Shrinkage in exponential families}
It is interesting and useful (for computation, see next section) to view estimation of the parameter $\psi$ of the shuffled model as a missing data problem. The complete data are the pair $(X,\Pi)$ while we observe only $X$. Maximum likelihood estimation with missing data is relatively easy in the case of exponential families. Therefore, we will now turn to this special case. 
So, suppose that the distribution of the data follows a regular exponential family with a density of the form
\begin{equation}
h(x) \exp\{\eta \cdot t(x)\}/ a(\eta)
\end{equation}
where $a(\eta) = \int h(x) \exp\{\eta \cdot t(x)\} dx$. A familiar example is the multivariate normal distribution with mean $\mu=(\mu_1,\mu_2,\dots,\mu_{p-1})$ and covariance $\sigma^2 I$. The canonical parameter is
$$\eta=\left(\frac{\mu_1}{\sigma^2},\frac{\mu_2}{\sigma^2},\dots,\frac{\mu_{p-1}}{\sigma^2},\frac{-1}{2\sigma^2}\right)$$
and the sufficient statistic is $t(x)=(x_1,x_2,\dots,x_{p-1},\sum_{i=1}^{p-1} x_i^2).$
The MLE of the (canonical) parameter $\eta$ is the solution of the likelihood equation
\begin{equation}
t(x) - E_\eta (t(X)) = 0.
\end{equation}
Consider the mean-value mapping $\theta(\eta) =  E_\eta(t(X))$. Then the likelihood equation for the parameter $\theta$ is trivial: $t(x) - \theta=0$. Since the likelihood equation has at most one solution, the mapping $\theta(\eta)$ is one-to-one, hence invertible.  So, we can re-parameterize the exponential family in terms of $\theta$ and we write
\begin{equation}
f(x | \theta)=h(x) \exp\{\eta(\theta) \cdot t(x)\} / a(\eta(\theta)).
\end{equation}
In the normal example, the mean value parameter is
$$\theta = \left( \mu_1,\mu_2,\dots,\mu_{p-1},\sum_{i=1}^{p-1} (\sigma^2 + \mu_i^2) \right)$$
and the mapping from the mean value parameter to the canonical parameter is given by
$$\eta(\theta)=\left( \frac{(p-1)\theta_1}{\theta_p - \sum_{i=1}^{p-1} \theta_i^2},\dots, \frac{(p-1)\theta_{p-1}}{\theta_p - \sum_{i=1}^{p-1} \theta_i^2},\frac{-(p-1)}{2(\theta_p - \sum_{i=1}^{p-1} \theta_i^2)}\right).$$
Let us now turn to the shuffled model with parameter $\psi$. We sample a permutation $\Pi$ uniformly at random from a group $G$ and given $\Pi=\pi$, we sample $X$ from an exponential family with mean value parameter $\theta=\psi_\pi$. Thus, the pair $(X,\Pi)$ is distributed according to 
\begin{align}
f(x,\pi | \psi) &= \frac{1}{|G|} f(x | \psi_\pi) =\frac{1}{|G|} h(x) \exp\{\eta(\psi_\pi) \cdot t(x)\} / a(\eta(\psi_\pi)).
\end{align}
Suppose that for all permutations $\pi \in G$ we have
\begin{equation}\label{perm cond}
\eta(\psi_\pi) = \eta(\psi)_\pi.
\end{equation}
In the normal example, the group of all permutations of the first $p-1$ elements of the mean value parameter satisfies this condition. The group of {\em all} permutations of  the mean value parameter does not satisfy this condition. So, here the condition means that we must not swap $\sigma^2$ with any of the $\mu_i$. If (\ref{perm cond}) does hold for all $\pi \in G$, then
\begin{align}
f(x,\pi | \psi) &= \frac{1}{|G|} h(x) \exp\{\eta(\psi_\pi) \cdot t(x)\} / a(\eta(\psi_\pi)) \notag \\
&=\frac{1}{|G|} h(x) \exp\{\eta(\psi)_\pi \cdot t(x)\} /\int h(x) \exp\{\eta(\psi)_\pi \cdot t(x)\} dx \notag \\
&=\frac{1}{|G|} h(x) \exp\{\eta(\psi) \cdot t(x)_{\pi^{-1}}\} / \int h(x) \exp\{\eta(\psi) \cdot t(x)_{\pi^{-1}}\} dx.
\end{align}
We see that  the pair $(X,\Pi)$ is distributed according to an exponential family with sufficient statistic $t(X)_{\Pi^{-1}}$ and mean-value parameter $\psi$.  The complete data MLE of $\psi$ is the sufficient statistic $t(X)_{\Pi^{-1}}$.

Of course we observe only $X$. However, as we have assumed a {\em regular} exponential family, the likelihood equation for the observed data has a very simple form \cite{sundberg1974maximum}
\begin{equation}\label{score}
E_\psi(t(X)_{\Pi^{-1}} \mid X) - E_\psi(t(X)_{\Pi^{-1}})= E_\psi(t(X)_{\Pi^{-1}} \mid X) - \psi = 0.
\end{equation}
We find that the (an) observed data MLE $\hat \psi$ of $\psi$, is a weighted average of permutations of $t(X)$. Next, we estimate $\theta=\psi_\Pi$ by the conditional expectation
\begin{equation}\label{shuffled}
\hat \theta_{\rm shuffle} = E_{\hat \psi}(\theta \mid X) =  E_{\hat \psi}(\hat \psi_\Pi \mid X) = E_{\hat \psi}(E_{\hat \psi}(t(X)_{\Pi^{-1}} \mid X)_\Pi \mid X).
\end{equation}
So,  $\hat \theta_{\rm shuffle}$ is a also a weighted average of permutations of $t(X)$. Since $\hat \theta = t(X)$ is the MLE of $\theta$, we find that $\hat \theta_{\rm shuffle}$ is a weighted average of permations of the MLE. This is a shrinkage effect because the range of $\hat \theta_{\rm shuffle}$ is smaller than (or equal to) the range of $\hat \theta$. However, there is no single shrinkage target, and no single shrinkage factor. Instead, each individual estimate $\hat \theta_i$ is pushed towards a nearby area of the parameter space $\Theta$ where many of the other elements of $\hat \theta$ are.

\section{Computation}
Maximizing the likelihood (\ref{lik}) is not a trivial matter because even for moderate $p$ the sum involves very many terms.  In this section we discuss the stochastic approximation (SA) EM algorithm of \cite{delyon1999convergence}, \cite{kuhn2004coupling}. This algorithm was used in 
\cite{anevski2013estimating} to maximize the likelihood (\ref{lik}) in the multinomial case. We also describe how the algorithm can be extended to deliver $\hat \theta_{\rm shuffle}$ as well as the MLE  $\hat \psi$. We have used the algorithm to obtain the results presented in the next section, but faster algorithms would certainly be needed to deal with problems where $p$ is large.

We view estimation of $\psi$ as a missing data problem, where the complete data is the pair $(X,\Pi)$ and the observed data is $X$. This leads us to consider the familiar EM algorithm \cite{dempster1977maximum}. Starting from an initial estimate $\psi^0$, we iterate the following two steps.

\medskip \noindent
E-step: Compute $Q(\psi | \psi^k) = E_{\psi^k} ( \log f(X ,\Pi \mid \psi) \mid X=x)$

\noindent
M-step: Maximize $Q(\psi | \psi^k)$ as a function of $\psi$ to obtain $\psi^{k+1}$.

\medskip \noindent
If (\ref{score}) holds, then the E and M steps collapse into a single step
\begin{equation}\label{cond expec}
\psi^{k+1} = E_{\psi^k} ( t(X)_{\Pi^{-1}} \mid  X ).
\end{equation}
Unfortunately, the EM algorithm as described is not practical because the conditional expectation again involves a sum over very many terms. However, for given $\psi$, we can use the Metropolis--Hastings algorithm to obtain samples from the conditional distribution of $\Pi$ given $X$.

Start with some initial choice for $\pi \in G$, say the identity. Propose $\pi'$ by selecting uniformly at random distinct $i$ and $j$ in $\{1,2,\dots,p\}$ and setting $\pi'(i)=\pi(j)$ and $\pi'(j)=\pi(i)$. The logarithm of the acceptance ratio of this move is
\begin{equation}
H = \log f(x,\pi' | \psi) -  \log f(x,\pi | \psi).
\end{equation}
Now sample $Z$ from the standard exponential distribution and accept $\pi'$ if and only if $-Z \leq H$.  Recall that if $Z$ has the standard exponential distribution, then $\exp(-Z)$ has the standard uniform distribution. It follows that if we repeat this procedure many times, the stationary distribution of the resulting Markov chain on $G$ will be the desired conditional distribution under $\psi$ of $\Pi$ given $X$.

We can now combine EM with Metropolis--Hastings. At each step of the EM algorithm we could run the Metropolis--Hastings algorithm for a long time to get a good approximation of the conditional expectation (\ref{cond expec}). This naive approach is not very efficient. If $\psi^k$ does not differ much from $\psi^{k-1}$, then surely we should use the conditional expectation under $\psi^{k-1}$ to get the conditional expectation under $\psi^k$.

The stochastic approximation (SA) EM algorithm does just that. At each step of the EM algorithm, we update the current approximation of the conditional expectation by combining it with a single step of the Metropolis--Hastings algorithm, starting from the current value of $\Pi$ and using the current value of $\psi$. The weight of the added sample should depend on the iteration number. If at iteration $k$ we denote this weight by $\gamma(k)$, then we must have $\sum_k \gamma(k)=\infty$ and $\sum_k \gamma(k)^2 < \infty$, cf.\ \cite{delyon1999convergence}. We choose $\gamma(k)=1/(k+1000)$.

We can extend the algorithm in one important way. The ability to sample from $P_\psi(\Pi \mid X)$ means that once we have computed the MLE $\hat \psi$, we can also compute $\hat \theta_{\rm shuffle} =E_{\hat \psi}(\hat \psi_\Pi \mid X).$ We can actually fold this computation into the algorithm: At iteration $k$ we can update the current approximation of $\hat \theta_{\rm shuffle}$ by combining it with a single step of the Metropolis--Hastings algorithm, where the weight of the added sample is $\gamma(k)$.

\section{Examples}
\subsection{James--Stein}
Efron and Morris  \cite{efron1973stein} noted that the James--Stein estimator arises from an empirical Bayes procedure. Consider the following hierarchical model:
\begin{enumerate}
\item Sample $\theta_1,\theta_2,\dots,\theta_p$ i.i.d.\ $N(\mu,\sigma^2)$
\item Observe a sample $X$ from $\it{MVN}(\theta,I)$.
\end{enumerate}
Estimate $\mu$ and $\sigma$ and define the James--Stein estimator as
$$\hat \theta_{JS} = E_{\hat \mu , \hat \sigma} (\theta \mid X).$$

A striking demonstration of superiority of the James--Stein estimator over the usual estimator was given by Efron and Morris \cite{efron1975data}. They considered batting averages of  18 professional baseball players during the 1970 season. Their goal was to use the first 45 at-bats to predict the batting average during the remainder of the season. Note that the remainder of the season has many more than 45 at-bats. 

We follow \cite{efron1975data} and use an arcsine transformation to transform the data so that they are approximately normally distributed with unit variance. We compute the estimates of the transformed parameters, end then transform back. The total squared prediction error of the usual estimator is 0.086. The error of the James--Stein estimator is much lower at 0.027. We can compute the shuffled estimator with the SA-EM algorithm of the previous section. It is very similar to James--Stein, cf Figure \ref{fig:baseball18}, and its error is also 0.027

\begin{figure}[ht]
\begin{center}
\includegraphics[scale=0.8]{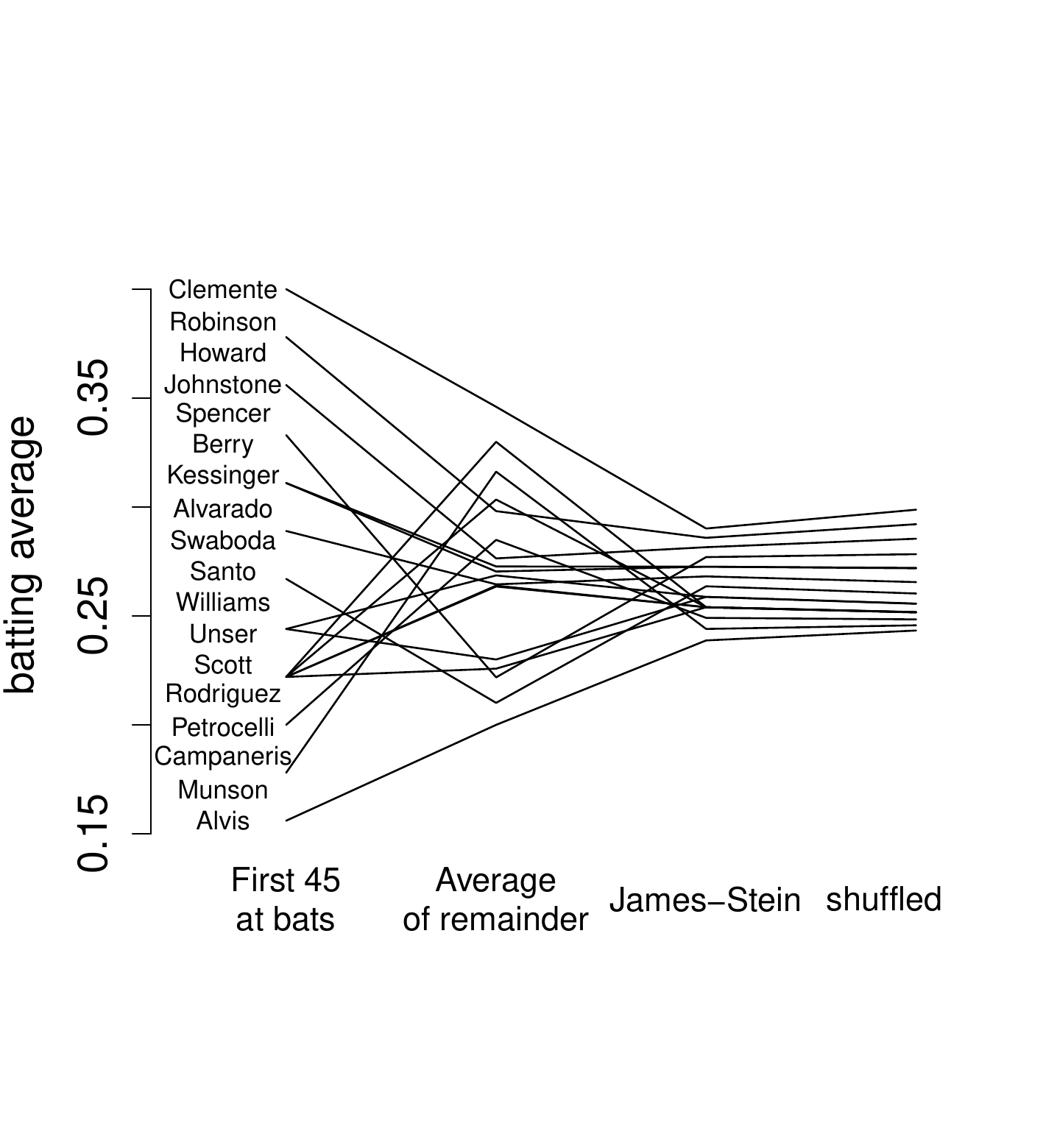}
\caption{Predicting batting averages.}\label{fig:baseball18}
\end{center}
\end{figure}

Efron and Morris \cite{efron1975data} state: ``We also wanted to include an unusually good hitter (Clemente) to test the method with at least one extreme parameter, a situation expected to be less favorable to Stein's estimator.'' To show that the shuffled estimator can be quite different from the James--Stein estimator, we change the data a little by making Clemente's performance even better than it already was. In reality, Clemente had 18 hits off his first 45 at-bats and 127 hits off the remaining 367 at bats. In Figure \ref{fig:baseball20} we show what the estimators would have been if Clemente's batting average had been about 4.4\% better, with 20 hits off his first 45 at-bats and 143 hits off the remaining 367 at bats.  The James--Stein estimator responds to the outlier by reducing the overall shrinkage factor. The shuffled estimator works differently. It tries to assign Clemente's performance to the other players (by shuffling the parameter), but that does not provide a very good fit to the data. Therefore,  Clemente is largely set apart from the other players. 

\begin{figure}[ht]
\begin{center}
\includegraphics[scale=0.8]{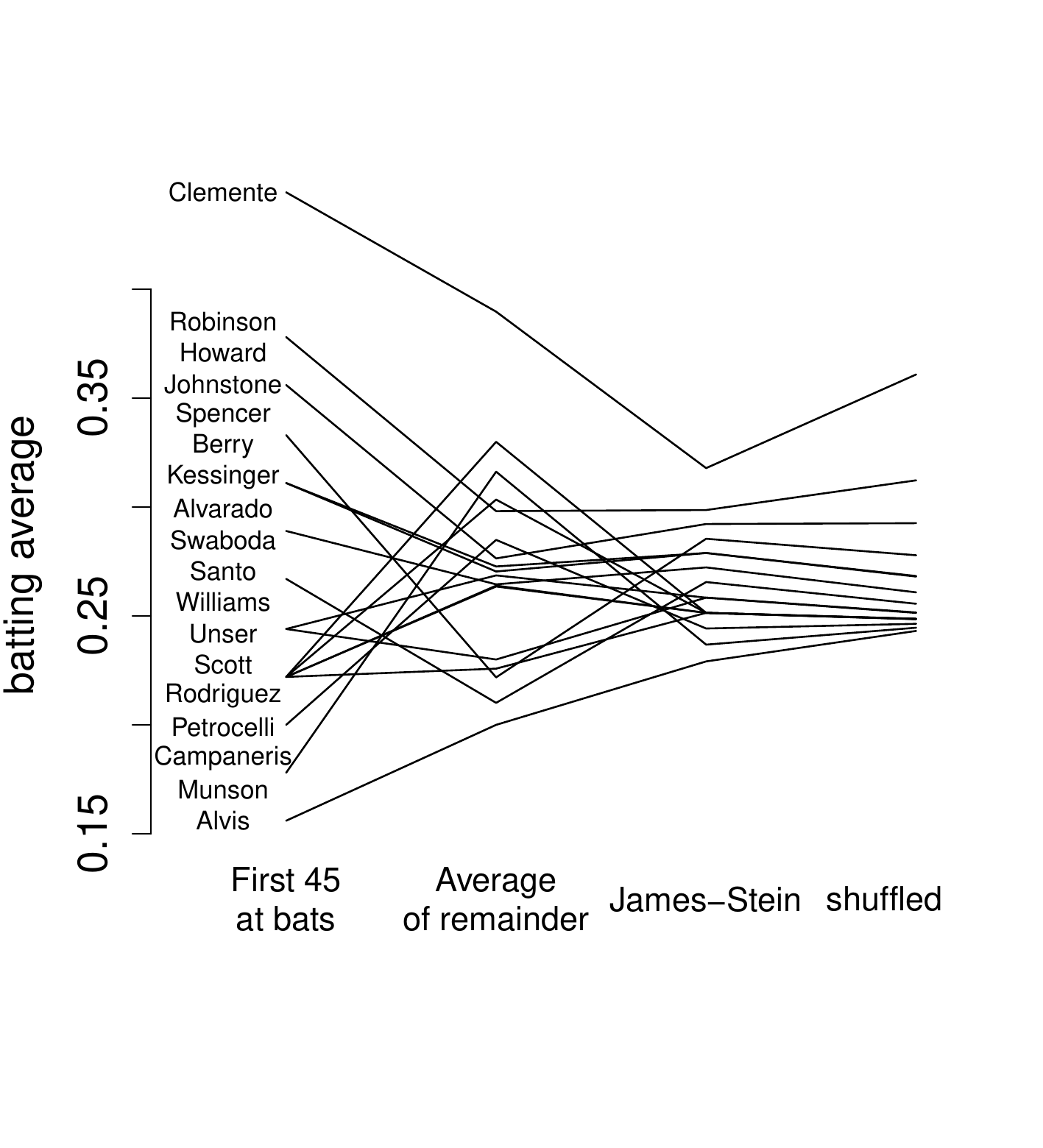}
\caption{Predicting batting averages.}\label{fig:baseball20}
\end{center}
\end{figure}

\subsection{Species sampling}
Suppose that we observe a sample $X=(X_1,X_2,\dots,X_p)$ from the multinomial distribution with parameters $n$ and $\theta$. The MLE of $\theta$ is just  the vector of empirical proportions. As we pointed out in the introduction, the MLE assigns no mass to the unobserved categories. If there are many probabilities that are small in relation to the sample size, then a substantial part of the probability mass will be unobserved.

There are various methods for estimating the probabilities of a multinomial distribution that do assign positive mass to the unobserved categories. For instance, a fully Bayesian approach with a Dirichlet prior for the vector $p$ would lead to adding pseudocounts to the observed counts. A limitation of this approach is that the estimates depend heavily on the parameters of the Dirichlet, see \cite{church1991comparison} for an extensive critique.

The approach of Good and Turing \cite{good1953population}, \cite{robbins1956empirical}, \cite{nadas1985turing} is perhaps the earliest example of non-parametric empirical Bayes. Consider the following hierarchical model.
\begin{enumerate}
\item Fix an arbitrary $\psi=(\psi_1,\psi_2,\dots,\psi_p)$ such that $\psi_i \geq 0$ and $\sum_{i=1}^p \psi_i=1$.
\item Sample $\theta$ uniformly at random from $\psi_1,\psi_2,\dots,\psi_p$.
\item Observe a sample $X$ from the \underline{bi}nomial distribution with parameters $n$ and $\theta$.
\end{enumerate}
Note that $\theta$ is now a (one-dimensional) random variable. The conditional mean of $\theta$ given $X = k$ is
\begin{equation}\label{post mean}
E(\theta \mid X = k) = \frac{\sum_{i=1}^p \psi_i^{k+1} (1-\psi_i)^{n-k}}{\sum_{i=1}^p \psi_i^{k}(1-\psi_i)^{n-k}}.
\end{equation}
Now,  let $N_{k,n}$ denote the number of categories that are observed $k$ times in a sample of size $n$. Good \cite{good1953population} notes that
\begin{equation}
E(\theta |X = k) = \frac{(k+1)E N_{k+1,n+1}}{(n+1)E N_{k,n}} \approx \frac{(k+1)E N_{k+1,n}}{(n+1)E N_{k,n}}\\
\end{equation}
and we can estimate $E(\theta |X = k)$ by replacing expectations with counts (and replacing $n+1$ by $n$)
\begin{equation}
\hat E(\theta |X = k)=\frac{(k+1)N_{k+1,n}}{n\,N_{k,n}}.
\end{equation}
We can use $\hat E(\theta |X = x_i)$ as an estimate of $\theta_i$  (probability of the $i$-th category). Moreover, we can estimate the total unobserved mass as
\begin{equation}\label{good turing}
\sum_{i : x_i=0} \hat E(\theta |X = x_i) = \frac{N_{1,n}}{n}
\end{equation}
which is just the proportion of ``singletons''.

The estimator of the unobserved mass performs very well in practice, but the estimates of the individual probabilities are not so good. For instance, they do not add up to one because probability mass is lost whenever some category is observed $k$ times, while no category is observed $k+1$ times. In fact, as $n$ grows large, it becomes increasingly unlikely that there is {\em any} pair of categories such that $x_j=x_i + 1$ and in the limit, the estimator will not assign any probability mass at all. Good was well aware of this, and he suggested smoothing the sequence $N_{1,n},N_{2,n},N_{3,n},\dots$.  Various smoothing techniques have been proposed since and we refer to \cite{chen1999empirical} for an overview and a comparative study. 

The shuffled model is both an extension and generalization of the model described above. Contrary to that model, the shuffled model fully describes the distribution of the data and hence it allows for maximum likelihood estimation of $\psi$ and, subsequently, estimation of $\theta$.

To demonstrate the shuffled model in the multinomial case, we have used the SA-EM algorithm to perform a modest simulation experiment. We sample a vector $\theta$ of length $p$ from the Dirichlet distribution with parameter vector $\alpha_1=\alpha_2=\dots,\alpha_{p}=1$. Next, we sample a vector $X$  from the multinomial distribution with parameters $n=50$ and $\theta$. We choose $p=5,10,25,50$, and then the expected unobserved mass is about 0.01, 0.03, 0.10, 0.25, respectively. We compare the total squared error of the MLE $\hat \theta = X/n$ and the shuffled estimator $\hat \theta_{\rm shuffle}$. As a benchmark, we  use the error of the posterior mean $\hat \theta_{\rm Bayes} = (X+1)/(n+p)$, which cannot be improved upon. We repeated the experiment 100 times, and show the results in Figure \ref{fig:species}. The shuffled estimator seems to perform slightly worse than the MLE when $p=5$ and 10, but decidedly better when $p=25$ and 50.

\begin{figure}[ht!]
\begin{center}
\includegraphics[scale=0.8]{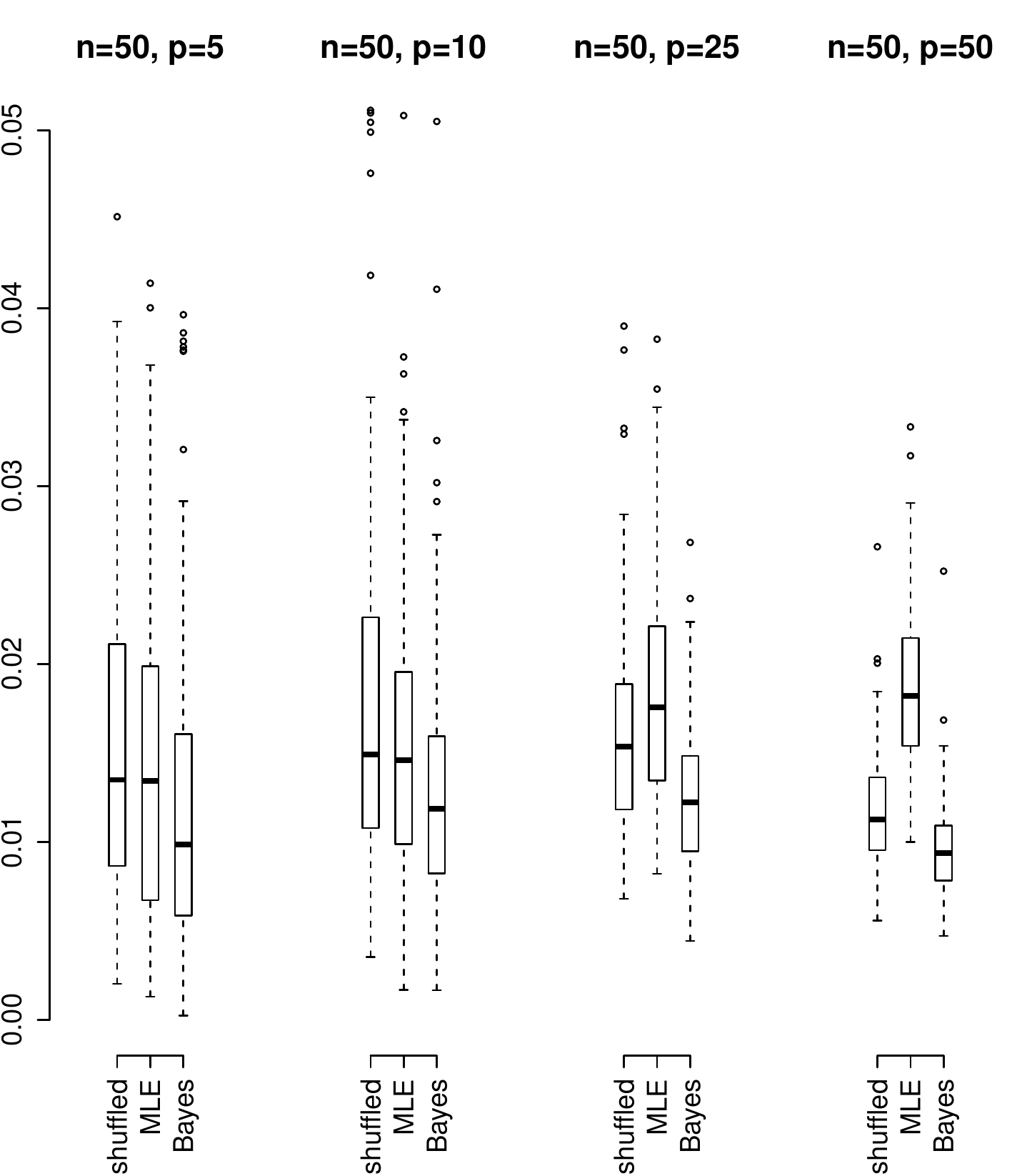}
\caption{Total squared error for estimating a multinomial distribution.}\label{fig:species}
\end{center}
\end{figure}

\section{Discussion}
We introduced {\em shuffling} as a method to assume partial exchangeability of a parameter vector $\theta$ with respect to a group $G$ of permutations. Shuffling can be viewed as an empirical Bayes approach, where we estimate the set of parameter values $\psi=\{\theta_1,\theta_2,\dots,\theta_p\}$ while putting a uniform prior their ordering. To be more precise, we should refer to $\psi$ as a multiset (with cardinality $p$) because the same parameter value could occur multiple times.

The shuffled model can also be motivated without assuming a partially exchangeable prior for $\theta$. We demonstrated that if we have a decision rule with a risk that is invariant  with respect to permutations in $G$, then the Bayes risk is {\it as if} $\theta$ has a partially exchangeable prior. 

The goal of shuffling is to ``borrow strength'' under minimal assumptions, and in this sense it is very similar to non-parametric empirical Bayes (NP-EB), cf.\  \cite{laird1978nonparametric}. While shuffling assumes exchangeability of the $\theta_i$, NP-EB assumes that the $\theta_i$ are i.i.d. Exchangeability is a weaker assumption than i.i.d. -- it is implied by it.  The difference may seem slight, but it does have several important consequences:

\begin{enumerate}
\item If $\theta$ represents the vector of probabilities of a multinomial model, then the $\theta_i$ can be exchangeable but not i.i.d.\ because they must add up to one.
\item The shuffled model is $p$ dimensional while the NP-EB model is infinite dimensional. This means that in the NP-EB model, the estimate of the prior can only converge to the true prior if $p$ (the number of units) becomes large. In the shuffled model, the estimate $\hat \psi$ can converge to $\psi$, if the information per unit increases.
\item The shuffled model concerns only the available units but the NP-EB model is generalizable to new units. Generalization to new units is appropriate if the available units are a random sample from all units.
\item The decision theoretic argument that provides an alternative motivation for the shuffled model does not seem to be available for the NP-EB model. 
\end{enumerate}

In this paper, we focused on parameter estimation. However, the shuffled model may be useful for any decision problem with permutation invariant risk, such as ranking exchangeable units (such as schools or hospitals) and testing multiple exchangeable hypotheses.

\section*{Acknowledgement}
I would like to thank Jelle Goeman for his help with understanding exchangeability, and Richard Gill for sharing his implementation of the SA-EM algorithm.

\section*{Appendix}
We provide {\tt R} code to compute the shuffled estimator in the baseball example.

\begin{verbatim}
saem = function(x,start,Niter){
  m=length(x)
  s2=1
  psi=start                                      # starting value for psi
  theta=start                      # starting value for E(theta | x, psi)
  perm=1:m                               # starting value for permutation
  for (iter in 1:Niter ){
    swap=sample(m,2); a=swap[1]; b=swap[2]                     # proposal
    mh=   (x[a]-psi[perm[a]])^2+(x[b]-psi[perm[b]])^2
    mh=mh-(x[a]-psi[perm[b]])^2-(x[b]-psi[perm[a]])^2
    mh=mh/(2*s2)
    if (runif(1) < exp(mh) | is.nan(mh)) {
      perm[c(a,b)]=perm[c(b,a)]                         # accept proposal
    }
    perm.inv=order(perm)                            # inverse permutation
    g=1/(1000+iter)
    psi=(1-g)*psi + g*x[perm.inv]
    theta=(1-g)*theta + g*psi[perm]
  }
  return(list(psi=psi,theta=theta))
}

ave45=c(18,17,16,15,14,14,13,12,11,11,10,10,10,10,10,9,8,7)/45
x=sqrt(45)*asin(2*ave45-1)                      # arcsine transformation
for (i in 1:10){                                       # do 10 re-starts
  fit=saem(x,start,100000)
  start=fit$psi
}
theta=(sin(fit$theta/sqrt(45))+1)/2                     # transform back
\end{verbatim}

\printbibliography

\end{document}